\documentclass[submission,copyright,creativecommons]{eptcs}

\providecommand{\event}{AFL 2026} 

\usepackage{iftex}

\ifpdf
  \usepackage{underscore}         
  \usepackage[T1]{fontenc}        
\else
  \usepackage{breakurl}           
\fi

\usepackage{graphicx} 
\usepackage{amsmath, amssymb, amsthm}
\usepackage{tikz}
\usepackage{comment}
\usepackage{lineno}
\usetikzlibrary{automata,positioning,arrows.meta,calc}

\definecolor{stateyellow}{RGB}{219,169,26}
\definecolor{stateteal}{RGB}{34,190,200}
\definecolor{edgepurple}{RGB}{120,0,200}
\definecolor{stateviolet}{RGB}{170,90,210}
\definecolor{edgeturquoise}{RGB}{70,210,200}

\DeclareMathOperator{\ta}{\mathtt{a}}

\usetikzlibrary{automata, positioning, arrows.meta}

\DeclareMathOperator{\N}{\mathbb{N}}

\newtheorem{question}{Question}
\newtheorem{problem}{Problem}
\newtheorem{theorem}{Theorem}
\newtheorem{proposition}[theorem]{Proposition}
\newtheorem{lemma}[theorem]{Lemma}
\newtheorem{corollary}[theorem]{Corollary}

\theoremstyle{definition}
\newtheorem{definition}{Definition}
\newtheorem{example}{Example}
\newcommand{\bigo}{O}

\title{Deterministic Bandwidth of Finite Languages}
\author{Da-Jung Cho\institute{Ajou University}\email{dajungcho@ajou.ac.kr} \and Szil\'ard Zsolt Fazekas\institute{Akita University}\email{szilard.fazekas@ie.akita-u.ac.jp} \and Max Wiedenh\"oft\institute{Kiel University}\email{maw@informatik.uni-kiel.de}}

\def\titlerunning{Deterministic Bandwidth of Finite Languages}
\def\authorrunning{D.-J. Cho et al.}

\begin{document}

\maketitle
\begin{abstract}
Bandwidth restricts how far transitions can move under an ordering of the states in an automaton.
While every finite language admits a bandwidth-$2$ NFA representation, the deterministic setting is substantially more restrictive.
We investigate the bandwidth of partial DFAs accepting finite languages.
We show that bounded bandwidth imposes strong structural restrictions on deterministic representations. We prove that there is an infinite hierarchy of classes of finite languages defined by deterministic bandwidth. As a special case of interest, we consider finite languages accepted by bandwidth-$1$ partial DFAs, and show that they admit a positional characterization, which yields a polynomial-time decision algorithm.
We further study how the minimum DFA bandwidth can be estimated from the structure of the minimal DFA.
We derive computable upper and lower bounds based on position-unfolding, and local growth of reachable residual states.
These bounds can be computed efficiently and differ by at most a linear factor in the maximum word length.
We also present a simple language family where the bounds match exactly.
\end{abstract}

    \section{Introduction}

Representing finite languages by finite automata is a central problem in automata theory~\cite{hopcroft}.
In many applications, however, the goal is not only to recognize a language, but also to obtain a compact and well-structured representation of it. Recently, finite automata encodings of finite languages were studied in the context of template construction problems for sequential generation systems~\cite{ChoFSW25}. The main goal was to construct a single template that generates a desired set of target sequences while avoiding unintended ones. To achieve this, target sequence sets were represented by finite automata and encoded onto circular template structures. In particular, arbitrary regular languages were realized by encoding NFAs into circular DNA templates and simulating their computations through co-transcriptional splicing mechanisms~\cite{ChoFSW25NC,ChoFSW25}. The size and structure of the encoded automaton directly affect the complexity of the resulting template construction. This naturally leads to the study of automata with local transition behavior.
Related structural restrictions on deterministic finite automata have been studied for subregular language classes, including ordered automata~\cite{HolzerT15,ShyrT74, Truthe14}, although they impose different constraints from bandwidth.

One natural structural restriction is bandwidth~\cite{ChinnCDG82,Harper06}. For an ordering of the states, the bandwidth determines the maximum distance that can be traversed by a single transition~\cite{ChoFSW25}. Small bandwidth restricts transitions to nearby states and enforces local transition behavior. Our previous work studied k-bandwidth NFAs~\cite{ChoFISTW26CIAA}. In these automata, transitions can skip only a bounded number of states under a suitable ordering of the automaton. We showed that k-bandwidth NFAs form a strict hierarchy of language classes, where increasing the bandwidth strictly increases expressive power. We also proved that every finite language admits a bandwidth-$2$ NFA representation.

This naturally raises the deterministic counterpart of the problem.
Unlike NFAs, DFAs cannot merge computational paths through nondeterminism~\cite{hopcroft,SakodaS78}. All branching behavior must therefore appear explicitly in the automaton structure. Because of this, bandwidth becomes a more restrictive structural property in deterministic automata. Moreover, minimizing the number of states does not necessarily lead to representations with small bandwidth. A minimal DFA may require substantially larger bandwidth than another equivalent deterministic automaton. This suggests that DFA bandwidth captures a structural aspect of deterministic representations that differs fundamentally from classical state complexity measures.

The deterministic case raises several new questions. Does there exist a constant $k$ such that every finite language can be represented by a k-bandwidth DFA? If not, how does the required bandwidth grow with the structure of the language? Can the minimum bandwidth be estimated from the minimal DFA? More generally, what kinds of finite languages admit low-bandwidth deterministic representations? These questions are fundamentally different from the nondeterministic setting. For NFAs, bandwidth~$2$ already suffices for all finite languages. In DFAs, however, as we will see, we get an infinite hierarchy of language classes defined by their bandwidth, even if we restrict consideration to finite languages.
Similar infinite hierarchies over finite languages have been established for other automata complexity measures, such as union-complexity~\cite{Nagy19}.

In this paper, we study the bandwidth of partial DFAs accepting finite languages. We first investigate the special case of bandwidth~$1$. We show that finite languages accepted by bandwidth-$1$ partial DFAs have a positional structure. This characterization leads to a polynomial-time decision algorithm. We then study upper and lower bounds on DFA bandwidth. For the upper bound, we introduce a construction based on unfolding the minimal partial DFA along word positions. For the lower bounds, we analyze the local growth of reachable residual states in the minimal DFA~\cite{Brzozowski64}. These bounds can be computed efficiently from the minimal DFA of the language. Together, they provide structural estimates for the minimum DFA bandwidth. We further show that the gap between the upper and lower bounds is bounded linearly by the maximum word length. Finally, we present an infinite language family where the bounds match exactly, albeit using an unbounded alphabet.

These results highlight a clear difference between nondeterministic and deterministic automata. For DFAs, bandwidth becomes closely connected to the structural organization of the language and its residual states.

    \section{Preliminaries}

Let $\N$ denote the set of positive integers.
For some $m\in\N$, denote by $[m]$ the set $\{1,2,...,m\}$.
With $\Sigma$, we denote a finite set of symbols, called an \emph{alphabet}.
The elements of $\Sigma$ are called \emph{letters}.
A \emph{word} over $\Sigma$ is a finite sequence of letters from $\Sigma$.
With $\varepsilon$, we denote the \emph{empty word}.
The set of all words over $\Sigma$ is denoted by $\Sigma^*$.
The \emph{length} of a word $w\in\Sigma^*$, i.e., the number of letters in $w$, is denoted by $|w|$; hence, $|\varepsilon| = 0$.
For some $w\in\Sigma^*$, if $w = xyz$ for some $x,y,z\in\Sigma^*$, then $x$ is a \emph{prefix}, $y$ is a \emph{factor}, and $z$ is a \emph{suffix} of $w$. 
A nondeterministic finite automaton~(NFA) is a tuple $A = (\Sigma, Q, q_0,\delta, F)$ where $\Sigma$ is the input alphabet, $Q$ is the finite set of states, $\delta \colon Q \times \Sigma \rightarrow 2^Q$ is the multivalued transition function, $q_0 \in Q$ is the initial state, and $F \subseteq Q$ is the set of final states.
In the usual way, $\delta$ is extended as a function $Q \times \Sigma^* \rightarrow 2^Q$ and the language accepted by $A$ is $L(A) = \{ w \in \Sigma^* \mid \delta(q_0, w) \cap F \neq \emptyset \}$.
The automaton $A$ is a deterministic finite automaton~(DFA) if $\delta$ is a single valued partial function.
It is well known that deterministic and nondeterministic finite automata recognize the class of {\em regular languages}~\cite{hopcroft}.
In a previous paper \cite{ChoFISTW26CIAA}, the authors were concerned with the notion of $k$-bandwidth NFA.

\begin{definition}[$k$-bandwidth NFA]\label{definition:k-bandwidth-NFA}
    Given $k\in\N$, an NFA $A = (Q, \Sigma, \delta, q_0, F)$ is \emph{$k$-bandwidth} ($k$-BW-NFA) if the states can be indexed as $Q = \{q_0,...,q_{n-1}\}$ such that for all $q_i,q_j\in Q$ and $\ta\in\Sigma\cup\{\varepsilon\}$,  $q_j\in \delta(q_i,\ta)$ implies $0 < (j-i) \bmod n \leq k $.
\end{definition}

In this paper, we extend the view to the deterministic setting. For practical reasons, in this extension, we ignore sink states and therefore focus on partial DFAs. This goes hand in hand with the practical motivation of using this model in the context of encoding sets of words on DNA templates in the context of co-transcriptional splicing (see Introduction and references therein). We obtain the following definition.

\begin{definition}[$k$-bandwidth DFA]\label{definition:k-bandwidth-DFA}
    Given $k\in\N$, we say that a partial DFA $A = (Q, \Sigma, \delta, q_0, F)$ is \emph{$k$-bandwidth} ($k$-BW-DFA) if it is also a $k$-bandwidth NFA.
\end{definition}


\subsection{Problems Regarding the Deterministic Bandwidth of Finite Languages}

Similarly to the NFA setting, based on the motivation provided in the introduction, we can ask the following practically motivated questions. First, we know that for all finite languages $L_f$, there exists a $2$-BW-NFA $A$ such that $L(A) = L_f$. As such a small constant bound promises a higher practical feasibility, it would be interesting to know whether such a bound also exists in the deterministic setting.

\begin{question}
    Does there exist a $k\in\mathbb{N}$ such that for all finite languages $L_f$, there is a $k$-BW-DFA $A$ with $L(A) = L_f$?
\end{question}

As we will see later in the paper, the answer to this question is \emph{no}. Hence, determining the bandwidth necessary to model a given finite language $L_f$ in a $k$-BW-DFA emerges as a relevant problem.

\begin{problem}
    Given a finite language $L_f$ and a number $k > 0$, does there exists a $k$-BW-DFA $A$ with $L(A) = L_f$?
\end{problem}

As this problem is likely intractable, given that the bandwidth problem for graphs is NP-hard, even for very restricted graph classes, such as directed acyclic graphs, trees, etc.~\cite{ChinnCDG82}, and even hard to approximate in many cases, here we focus on efficiently computable lower and upper bounds for input finite languages presented as DFA.

\begin{question}
    Let $L_f$ be some finite language and let $k\in\N$ be the minimal bandwidth necessary to construct a $k$-BW-DFA $A$ with $L(A) = L_f$. Can we give efficiently computable nontrivial upper and lower bounds for $k$?
\end{question}

Also, similarly to the NFA setting, we could ask whether there exists a strict hierarchy of classes of languages that can be accepted by $k$-BW-DFA for all $k>0$. As mentioned before, we will see that even for finite languages, no constant $k$ exists that can be used to model all finite languages with $k$-BW-DFA. Hence, we can extend the question of the existence of such a hierarchy even to the finite setting.


\begin{question}
    For every $k\in\N$, does there exist a finite language $L_f$ with a minimum DFA bandwidth of exactly $k$?
\end{question}


\section{The Hierarchy of $k$-Bandwidth DFAs}

In this section, we are concerned with the last question given in the last section, i.e., the question of whether there exists a strict hierarchy of languages recognized by $k$-BW-DFA, based on the selection of $k$. For $k$-BW-NFA, the existence of such a hierarchy has been shown in~\cite{ChoFISTW26CIAA}. Indeed, the construction given there utilizes partial DFAs already. Hence, in general over all regular languages, we immediately obtain the following result. Let $\mathcal{L}_{d,k}$ be the class of languages recognizable by $k$-BW-DFA.

\begin{corollary}\cite{ChoFISTW26CIAA}
    The proper inclusion $\mathcal{L}_{d,k-1} \subsetneq \mathcal{L}_{d,k}$ holds, for any $k \geq 2$, even for binary alphabets $\Sigma$.
\end{corollary}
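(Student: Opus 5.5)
The corollary is inherited from the NFA hierarchy of \cite{ChoFISTW26CIAA}, so I would reuse that result instead of building a new separating family. The inclusion $\mathcal{L}_{d,k-1}\subseteq\mathcal{L}_{d,k}$ is immediate from Definition~\ref{definition:k-bandwidth-DFA}. If a partial DFA has an indexing with $0<(j-i)\bmod n\le k-1$ for every transition $q_j\in\delta(q_i,\ta)$, then the same indexing also satisfies $0<(j-i)\bmod n\le k$. So every $(k-1)$-BW-DFA is a $k$-BW-DFA.

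Strictness is where the earlier work is needed. For each $k\ge 2$, the hierarchy result for $k$-BW-NFA gives a language $L_k$ over a binary alphabet that is accepted by some $k$-BW-NFA $A_k$ but by no $(k-1)$-BW-NFA. The key step is to check that the witnessing automaton $A_k$ in \cite{ChoFISTW26CIAA} is a partial DFA: it must have a single-valued partial transition function and no $\varepsilon$-transitions. By Definition~\ref{definition:k-bandwidth-DFA}, $A_k$ is then a $k$-BW-DFA, so $L_k\in\mathcal{L}_{d,k}$.

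For the separation I would argue by contradiction. Suppose $L_k\in\mathcal{L}_{d,k-1}$. Then some $(k-1)$-BW-DFA accepts $L_k$. Every $(k-1)$-BW-DFA is by definition a $(k-1)$-BW-NFA, so $L_k$ would be accepted by a $(k-1)$-BW-NFA, contradicting the choice of $L_k$. Hence $L_k\notin\mathcal{L}_{d,k-1}$, and the inclusion is proper. Since $L_k$ is binary, the separation holds over binary alphabets. The key fact here is that $\mathcal{L}_{d,k-1}$ is contained in the $(k-1)$-BW-NFA class, so the NFA lower bound transfers directly to DFAs.

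The main obstacle is the structural check on the cited construction. If the published witness used nondeterminism or $\varepsilon$-moves somewhere, I would first try to determinize it locally, for example by splitting the states involved. I would then re-verify that the resulting indexing still has bandwidth $k$, since the paper only asserts that the construction uses partial DFAs. The remaining steps follow directly from the definitions.
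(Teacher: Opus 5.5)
Your proposal is correct and is the argument the paper relies on. The paper notes that the witnesses for the $k$-BW-NFA hierarchy in the cited work are already partial DFAs, so they lie in $\mathcal{L}_{d,k}$; since every $(k-1)$-BW-DFA is a $(k-1)$-BW-NFA, they cannot lie in $\mathcal{L}_{d,k-1}$ (your determinization fallback is unnecessary, as the paper asserts the witnesses are already partial DFAs).
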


That leaves us with the question whether such a strict hierarchy also exists among the set of finite languages. Let $\mathcal{L}_{f,k}$ be the class of finite languages recognizable by $k$-BW-DFA. In~\cite{ChoFISTW26CIAA}, we have seen that a bandwidth of $2$ suffices to obtain all finite languages. As the following result shows, this does not hold for $k$-BW-DFA.  Indeed, we obtain the following.

\begin{proposition}[Infinite hierarchy]\label{proposition:dfa-bandwidth-finite-hierarchy}
    The proper inclusion $\mathcal{L}_{f,k-1} \subsetneq \mathcal{L}_{f,k}$ holds, for any $k \geq 2$, even for binary alphabets $\Sigma$.
\end{proposition}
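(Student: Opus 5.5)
The plan is to prove two things: that minimum DFA bandwidth is unbounded on finite binary languages, and that along a suitable chain of finite languages it increases by at most one at each step. Together these show that every value $k$ is attained exactly. A finite language of minimum bandwidth exactly $k$ lies in $\mathcal{L}_{f,k}\setminus\mathcal{L}_{f,k-1}$, and the inclusion $\mathcal{L}_{f,k-1}\subseteq\mathcal{L}_{f,k}$ is immediate from the definition. Throughout we may assume the partial DFA is trim. For a finite language the trim DFA is then acyclic, so ordering its states topologically shows that every finite language has some finite bandwidth, at most $n-1$ for $n$ states.

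\emph{Lower bound (unboundedness).} Fix a $k$-BW ordering $q_0,\dots,q_{n-1}$ with $q_0$ initial. Along any run, record the \emph{unwrapped} position, namely the sum of the jumps $(j-i)\bmod n\in[1,k]$. After reading a word of length $d$, this position lies in $[d,dk]$, and the actual index is the position reduced mod $n$. Suppose $2^d$ pairwise distinct states are reached by the $2^d$ words of length $d$, and suppose $dk<n$. Then these states have distinct indices inside an interval of only $d(k-1)+1$ integers, so $2^d\le d(k-1)+1$. If instead $dk\ge n$, then the automaton has fewer than $dk$ states, and we rule this out by choosing $L$ with enough residuals. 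I will take
\[
T_d=\{\,w\,a\,b^{\mathrm{val}(w)} : w\in\{a,b\}^d\,\}.
\]
All $2^d$ residuals $w^{-1}T_d$ are distinct, so every DFA for $T_d$ has at least $2^d$ states at depth $d$. For $k$ fixed and $d$ large, $2^d>d(k-1)+1$ while also $n\ge 2^d>dk$, so $T_d$ has no $k$-BW-DFA. Hence minimum bandwidth is unbounded on binary finite languages.

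\emph{Unit steps.} Next I build a chain $L^{(0)}\subset L^{(1)}\subset\cdots$ of finite binary languages. It starts from $L^{(0)}=\{a\}$, which has bandwidth $1$, and reaches $T_d$ for larger and larger $d$. At each step, the minimal partial DFA of $L^{(i+1)}$ is to be obtained from that of $L^{(i)}$ by adding one new state and the transitions at that state, e.g.\ adding words one at a time along the tree $\{a,b\}^{\le d}$. Given an optimal $b$-BW ordering for $L^{(i)}$, I insert the new state immediately after its unique predecessor. Every existing edge is stretched by at most $1$, and the new edges have length $1$ to the new state. Any out-edge from the new state must also be short, which I get by choosing the chain so that its out-edge goes to a state placed next to it. The result is a $(b+1)$-BW-DFA for $L^{(i+1)}$. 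Since bandwidth is a nonnegative integer, is $1$ at the start, grows by at most $1$ per step, and is unbounded, every $k\ge 1$ occurs.

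The hardest step is the design of this chain. The minimal DFA (or some DFA) must change by a single state per step, while each intermediate language must remain accepted by a DFA of bandwidth at most one more than before. The subtle point is that adding a word may force residuals to split, which adds several states at once. To avoid this, I would first try growing the tree level by level and attaching the distinguishing tails $a\,b^{\mathrm{val}(w)}$ only through states that already exist (sharing a single $b$-chain). If that fails, I would fall back to a direct explicit $k$-BW layout, interleaving the $2^d$ branches along the cycle, for the language $T_d$ with $d$ maximal subject to $2^d\le d(k-1)+1$, and compare it with the lower bound above.
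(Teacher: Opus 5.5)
Your unboundedness argument for $T_d$ is sound. Put $q_0$ at index $0$. The $2^d$ pairwise distinct depth-$d$ states then have unwrapped positions in $[d,dk]$, and $n\ge 2^d>dk$ rules out wrap-around. However, this only shows that $\operatorname{bw}_{\mathrm{DFA}}$ is unbounded on finite binary languages, i.e.\ that \emph{infinitely many} of the inclusions are proper. The proposition needs a witness in $\mathcal{L}_{f,k}\setminus\mathcal{L}_{f,k-1}$ for \emph{every} $k\ge 2$. For that, everything rests on the unit-step claim $\operatorname{bw}_{\mathrm{DFA}}(L^{(i+1)})\le\operatorname{bw}_{\mathrm{DFA}}(L^{(i)})+1$, which you neither prove nor instantiate with a concrete chain.

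Your insertion argument is carried out on the \emph{minimal} DFA, where there is one new state, a unique predecessor, and an out-neighbour you place yourself. But it must be applied to an arbitrary \emph{bandwidth-optimal} DFA $A$ for $L^{(i)}$, whose structure you do not control, and $A$ need not be minimal. The residual of the predecessor state $p$ may be carried by several unmerged states of $A$, possibly far apart in the ordering, and each of them needs the new transition. So you need either one new state close to all of them, which is impossible in general, or several insertions, which can stretch a single existing edge by more than $1$. Likewise, the out-neighbours of the new state are old residuals whose positions are dictated by $A$'s ordering, not by your choice of chain. So the discrete intermediate-value step has no support.

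The fallback does not close the gap either. By your own count, any $k$-BW-DFA for $T_d$ needs $k\ge 2^d/d$, so the single languages $T_d$ realise only a sparse set of values. For a given $k$ you would need one language that has both a $k$-BW layout and a proof that $k-1$ is impossible. A counting bound on depth-$d$ states is not sharp enough to give the second part. The missing idea is a family indexed directly by $k$ with a two-sided bound that is tight at $k$.

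The paper uses $L_{f,k}=\{a^ib^{k^2}a^i\mid i\in[k]\}$. All states $\delta(q_0,a^ib^t)$ are pairwise distinct. Under bandwidth $k-1$, the $k$ states $\delta(q_0,a^i)$ lie in the arc $q_0,\dots,q_{k(k-1)}$, which has $k^2-k+1$ states. Each of the $k$ $b$-paths has $k^2+1$ distinct states, so each must cross the cut after this arc. This gives $k$ distinct crossing $b$-transitions. By determinism, however, only the $k-1$ states immediately before a cut can send a $b$-edge across it, a contradiction. Interleaving the $k$ $b$-paths gives the matching $k$-BW-DFA.
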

\begin{proof}
    To show this, for each $k \geq 2$, we construct a specific finite language $L_{f,k}$, show that there exists a $k$-BW-DFA $A$ with $L(A) = L_{f,k}$, and then show that there cannot exist any $(k-1)$-BW-DFA $B$ with $L(B) = L_{f,k}$. Let $\Sigma = \{a,b\}$ be a binary alphabet. We set
    \[L_{f,k} = \{\ a^ib^{k^2}a^i \mid i\in[k]\ \}\]
    as the set of all palindromes with $a$-prefixes of lengths $1$ to $k$ and $b$-paths of length $k^2$ in the middle.

    First, we observe that we can construct a $k$-BW-DFA $A = (Q,\Sigma,\delta,q_0,F)$ that accepts $L_{f,k}$. Let $Q = \{q_0,...,q_n\}$ where $n+1 = 1+k+k(k^2+k)$. We obtain that number by the following intuition of the construction: Starting from the initial state (the $1$ in $n$), we use the first $k$ additional states to read the $a^i$ prefixes of each word (the $+k$ component in $n$). Then, we use $k$ alternating distinct paths in the automaton that read the $b^{k^2}a^i$ suffixes, respectively (the $k(k^2+k)$ component in $k$, where $k^2+k$ is the upper bound of the length $|b^{k^2}a^i|$). Each of these paths utilizes the bandwidth $k$ fully and starts in the state reached after reading the respective $a^i$ prefix.\footnote{Clearly, all but one suffix $a^i$ does not have length $k$, hence, this number merely represents an upper bound and could be optimized slightly} Figure~\ref{figure:dfa-hierarchy-construction} illustrates this construction.

    \begin{figure}[htbp]
        \centering
        
        \begin{tikzpicture}[
            >=Stealth,
            every node/.style={font=\small},
            state/.style={
                circle,
                draw,
                minimum size=9mm,
                inner sep=0pt,
                thick
            },
            blue state/.style={
                state,
                draw=blue!70!black,
                text=blue!70!black,
                fill=blue!8
            },
            red state/.style={
                state,
                draw=red!70!black,
                text=red!70!black,
                fill=red!8
            },
            neutral state/.style={
                state,
                draw=black,
                text=black,
                fill=gray!10
            },
            trans/.style={->, thick}
        ]
        
        \node[neutral state] (q0) {$q_0$};
        \node[blue state, right=of q0] (q1) {$q_1$};
        
        \node[right=0.5cm of q1] (dots1) {$\cdots$};
        \node[red state, right=0.5cm of dots1] (qk) {$q_k$};
        
        \node[blue state, right=1.0cm of qk] (qkp1) {$q_{k+1}$};
        
        \node[right=0.5cm of qkp1] (dots2) {$\cdots$};
        \node[red state, right=0.5cm of dots2] (qkl) {$q_{2k}$};
        
        \node[blue state, right=1.0cm of qkl] (q2kp1) {$q_{2k+1}$};
        
        \node[right=0.5cm of q2kp1] (dots3) {$\cdots$};
        \node[red state, right=0.5cm of dots3] (q2l2) {$q_{2k+2}$};
        
        \draw[trans] ($(q0.west)+(-0.8,0)$) -- (q0.west);
        
        \draw[trans] (q0) -- node[above] {$a$} (q1);
        \draw[trans] (q1) -- node[above] {$a$} (dots1);
        \draw[trans] (dots1) -- node[above] {$a$} (qk);
        
        \draw[trans, blue!70!black, bend left=38]
            (q1) to node[above] {$b$} (qkp1);
        
        \draw[trans, red!70!black, bend right=28]
            (qk) to node[below] {$b$} (qkl);
        
        \draw[trans, blue!70!black, bend left=38]
            (qkp1) to node[above] {$b$} (q2kp1);
        
        \draw[trans, red!70!black, bend right=28]
            (qkl) to node[below] {$b$} (q2l2);
        
        \end{tikzpicture}
        
        \caption{Illustration of construction of $k$-BW-DFA for the prefixes $L_{f,k}$. The remaining suffixes are added by extending the parallel $b$-paths until the end of each word and setting the last states to be final, respectively.}
        \label{figure:dfa-hierarchy-construction}
    \end{figure}

Now, suppose towards contradiction that there exists a $(k-1)$-BW-DFA $B = (Q,\Sigma,\delta,q_0,F)$ with $L(B) = L_{f,k}$. Let $h=k-1$. Fix an ordering of the states of $B$ witnessing bandwidth $h$, and let the initial state be the first state in the ordering. We can write the states in this circular order as
\(
    q_0,q_1,\ldots,q_{n-1}.
\) 
Thus every defined transition moves forward, modulo $n$, by a distance between $1$ and $h$. The contradiction will be obtained by identifying a cut in this ordering that is crossed by $k$ distinct $b$-transitions, one from each of the $k$ relevant $b$-paths, even though bandwidth $h=k-1$ allows at most $h$ distinct $b$-transitions to cross any fixed cut.

For $i\in[k]$ and $0\leq t\leq k^2$, let
\[
    s_{i,t}=\delta(q_0,a^ib^t).
\]
These states are all defined, since $a^ib^{k^2}a^i\in L_{f,k}$. We first observe that the states $s_{i,t}$ are pairwise distinct. Indeed, suppose that
\(
    s_{i,t}=s_{j,u}
\) 
for some $i,j\in[k]$ and $0\leq t,u\leq k^2$. Since
\[
    \delta(s_{i,t},b^{k^2-t}a^i)\in F,
\]
we also have
\[
    \delta(s_{j,u},b^{k^2-t}a^i)\in F.
\]
Hence
\[
    a^j b^{u+k^2-t} a^i\in L(B)=L_{f,k}.
\]
By the definition of $L_{f,k}$, this is possible only if $i=j$ and $u=t$. Thus all states $s_{i,t}$ are pairwise distinct. In particular,
\[
    n\geq k(k^2+1).
\]

Consider now the states $s_{i,0}=\delta(q_0,a^i)$, for $i\in[k]$. Since each transition has circular distance at most $h$, the state $s_{i,0}$ is reached from $q_0$ by moving forward by total distance at most $ih$. Moreover,
\[
    ih\leq kh=k(k-1)<k(k^2+1)\leq n,
\]
so no wrap-around (going from \(q_j\) to \(q_j'\), for some $j'\leq j$) can occur while reading any of the prefixes $a^i$. Therefore all states $s_{i,0}$ lie in the interval
\[
    P=\{q_0,q_1,\ldots,q_{kh}\}
\]
of the circular ordering.
Now let $C$ be the (graph) cut immediately after $q_{kh}$. We claim that, for every $i\in[k]$, the path
\[
    s_{i,0},s_{i,1},\ldots,s_{i,k^2}
\]
labelled by $b^{k^2}$ contains a $b$-transition crossing the cut $C$. Indeed, if this path did not cross $C$, then, starting in $P$, it would remain entirely inside $P$. But the path contains $k^2+1$ pairwise distinct states, whereas
\[
    |P|=kh+1=k(k-1)+1=k^2-k+1<k^2+1.
\]
This is impossible, so each of the $k$ paths labelled by $b^{k^2}$ contains a first $b$-transition crossing the cut $C$.

These $k$ crossing transitions are distinct, because all states $s_{i,t}$ are pairwise distinct. On the other hand, in a circular bandwidth-$h$ ordering, at most $h$ distinct $b$-transitions can cross a fixed cut. Indeed, the source state of any transition crossing $C$ must lie among the $h$ states immediately preceding the cut, and by determinism each such state has at most one outgoing $b$-transition. Therefore at most $h=k-1$ distinct $b$-transitions can cross $C$.
However, we have just obtained $k$ distinct $b$-transitions crossing the same cut above, contradicting $h=k-1$. Thus no $(k-1)$-BW-DFA accepts $L_{f,k}$. Therefore
\[
    L_{f,k}\in \mathcal L_{f,k}\setminus \mathcal L_{f,k-1},
\]
and the inclusion is proper.

\end{proof}

In contrast to the results obtained for NFAs, we see that no constant $k$ suffices to obtain all finite languages with $k$-BW-DFAs. Related to the practical motivation for this model, we hence know that we also have to increase the maximal achievable size of a removed factor by co-transcriptional splicing to increase even the number of finite languages that we can express using that operation, if we want to encode deterministic finite automata. For that reason, in the following sections, we now investigate specific questions regarding the required bandwidth to encode given finite languages using $k$-BW-DFAs.

    \section{Testing bandwidth~1 for finite languages}

From potential practical applications' point of view, one would like to minimize the distances between consecutive states in the linear encodings of automata similar to the circular DNA one mentioned in the introduction. The minimal non-trivial bound on those distances is $1$, therefore, deciding whether a language can be encoded with bandwidth $1$ is potentially of high practical interest. Bandwidth $1$ forces the automaton to use a unique single following state to which all outgoing transitions of a state must lead.  In the deterministic setting, the highly restrictive behavior of $1$-BW-DFA leads to the following result.

\begin{theorem}\label{bandwidth1DFA}
Let $M=(Q,\Sigma,\delta,q_0,F)$ be a DFA accepting a finite language.
One can decide in $O(|Q|^3|\Sigma|)$ time whether $L(M)$ is accepted by a
bandwidth $1$ partial DFA.
\end{theorem}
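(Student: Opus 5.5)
The plan is to first derive a purely positional description of the finite languages accepted by bandwidth-$1$ partial DFAs, and then test that description against the given DFA $M$ by an equivalence check.

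\textbf{Structure of bandwidth-$1$ automata.} Take a partial DFA $B$ with states ordered $q_0,\dots,q_{n-1}$ witnessing bandwidth $1$. Every defined transition of $q_i$, for every letter, goes to $q_{i+1 \bmod n}$.

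We restrict $B$ to its useful states, i.e.\ those both reachable from $q_0$ and co-reachable to $F$; we may assume $L(B)\neq\emptyset$. This restriction is harmless: the remaining states still form an arithmetic run $q_0,q_1,\dots,q_m$ along the circular order, since reachability only moves forward one step at a time.

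Next we exclude wrap-around. If some useful state $q_{n-1}$ had a useful transition back to $q_0$, then every useful state would lie on a cycle through $q_0$ from which a final state is reachable. That would make $L(B)$ infinite, a contradiction. Hence the trimmed $B$ is a simple path $q_0\to q_1\to\dots\to q_m$ whose $j$-th step carries a letter set $T_j\subseteq\Sigma$, together with a set $E\subseteq\{0,\dots,m\}$ of final positions. Consequently
\[
L(B)=\bigcup_{\ell\in E} T_1T_2\cdots T_\ell .
\]
Since the path is trimmed, $m=\max E$. Every letter of $T_j$ lies on an accepted word of length $m$, and every path word is accepted exactly at its length. So the sets are determined by $L=L(B)$ itself: $T_j=S_j:=\{w[j] : w\in L,\ |w|\ge j\}$ and $E=\{|w| : w\in L\}$.

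\textbf{Characterization.} We obtain that a finite language $L$ is accepted by a bandwidth-$1$ partial DFA if and only if $L=\bigcup_{\ell\in E}S_1\cdots S_\ell$ with $S_j$ and $E$ as above. Necessity was just shown. For sufficiency, the path automaton built from the $S_j$ and $E$ has bandwidth $1$ under its natural ordering; one can even add the unused closing edge formally, since the definition only constrains existing transitions.

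\textbf{Algorithm.} Trim $M$ by removing states that cannot reach $F$; this takes $O(|Q||\Sigma|)$ time. Since $L(M)$ is finite, the trimmed $M$ is acyclic and its longest word has length $m<|Q|$.

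Compute the layers $R_0=\{q_0\}$ and $R_{j}=\delta(R_{j-1},\Sigma)$ for $j\le m$, using only trimmed states. Then set $S_j=\{a : \exists p\in R_{j-1},\ \delta(p,a)\text{ defined}\}$. This is correct because every surviving state leads to acceptance, so any such letter occurs at position $j$ of some word of length at least $j$. Also set $E=\{j : R_j\cap F\neq\emptyset\}$. This costs $O(|Q|\cdot|Q||\Sigma|)$ time.

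Finally, build the path DFA $P$ with at most $|Q|$ states and decide $L(P)=L(M)$. We do this by exploring the product of the trimmed $M$ and $P$: there are at most $|Q|^2$ pairs and $|\Sigma|$ letters each. A mismatch occurs exactly when a pair has exactly one component final, or a letter is defined in exactly one component. The overall bound $O(|Q|^3|\Sigma|)$ comfortably covers all steps.

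\textbf{Main obstacle.} The delicate step is the structural one: arguing rigorously that trimming preserves the bandwidth-$1$ shape and that no wrap-around transition can be useful when $L$ is finite. Once this is settled, the identity $T_j=S_j$ makes the test a straightforward equivalence check.
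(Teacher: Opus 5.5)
Your proposal is correct and follows the paper's route: a trimmed bandwidth-$1$ partial DFA for a finite language is a simple path with letter sets and final positions, you build the unique candidate path automaton from $M$ by layered reachability, and you decide the question with an equivalence test. The only difference is minor. You take the $j$-th letter set from all accepted words of length at least $j$, computed by forward layers on the co-reachability-trimmed $M$, whereas the paper uses only the maximum-length words, computed via forward and backward layers. The two sets coincide whenever a bandwidth-$1$ representation exists, and the equivalence test rejects otherwise, so both algorithms decide the same question.
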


\begin{proof}
First we test whether $L(M)$ is finite. This is done by standard algorithms, deleting states
that are not both reachable and co-reachable, and
checking whether the remaining directed graph has a cycle. If such a cycle
exists, then $L(M)$ is infinite, so we can answer NO. Otherwise the useful
part is acyclic, so from now on we will assume that \(M\) has only useful states and is acyclic.

In such a bandwidth $1$ DFA, from each state $q_i$, every defined transition goes to the unique next state $q_{i+1}$. Since the automaton is acyclic, no transition can wrap from the last state of the circular ordering back to the first, otherwise it would create a directed cycle reachable from \(q_0\) and co-reachable to a final state. Thus, such an automaton is described by
letter sets
\[
A_1,A_2,\ldots,A_N\subseteq\Sigma
\]
and a set $I\subseteq\{0,\ldots,N\}$ of final state positions. It accepts the language
\[
\bigcup_{i\in I} A_1A_2\cdots A_i.
\]

Now let
\[
N=\max\{|w| \mid w\in L(M)\}.
\]
Since the DFA is acyclic, $N\le |Q|-1$, and $N$ can be computed by standard dynamic programming algorithms for
longest path in \(O(|Q|+|\delta|)\) time. Next, for $t=0,\ldots,N$, we define the forward layers
\[
\mathrm{Fwd}[t]=\{\delta(q_0,x) \mid x\in\Sigma^t\}.
\]
and the backward layers 
\[
\mathrm{Back}[t]=\{q\in Q \mid \exists y\in\Sigma^t,\ \delta(q,y)\in F\}.
\]
These sets are computed recursively:
\[
\mathrm{Fwd}[0]=\{q_0\},
\qquad
\mathrm{Fwd}[t+1]=\delta(\mathrm{Fwd}[t],\Sigma),
\]
and
\[
\mathrm{Back}[0]=F,
\qquad
\mathrm{Back}[t+1]=\{q\in Q \mid \exists a\in\Sigma,\ \delta(q,a)\in
\mathrm{Back}[t]\}.
\]
At each recursion level $t$, we can add at most $\min \{|Q|, m|\Sigma|\}$ states to the forward/backward layer, where $m$ is the number of states in the previous layer, so one layer can be computed in $O(|Q||\Sigma|)$ time and there are $O(N)$ layers.
For each $i\in[N]$, define
\[
A_i=\{a\in\Sigma \mid \exists q\in \mathrm{Fwd}[i-1]\text{ such that }
\delta(q,a)\in \mathrm{Back}[N-i]\}.
\]
$A_i$ is exactly the set of letters that occur in position $i$ among
the accepted words of maximum length $N$. For each $i$, we can compute $A_i$ in time $O(|Q||\Sigma|)$. Altogether we get all $A_i$ in time $O(|Q|^2|\Sigma|)$.

Let
\[
I=\{i\in\{0,\ldots,N\} \mid \mathrm{Fwd}[i]\cap F\neq\emptyset\}.
\]

Now construct a partial DFA $B$ with states
\[
0,1,\ldots,N.
\]
The initial state is $0$, and the final states are the positions in $I$.
For $i=1,\ldots,N$, and $a\in A_i$, we set 
$
\delta_{B}(i-1,a)=i
$.
Then
\[
L(B)=\bigcup_{i\in I}A_1A_2\cdots A_i.
\]

We claim that $L(M)$ has a bandwidth $1$ partial DFA
representation if and only if
$
L(M)=L(B).
$

Indeed, if such a representation exists, then every accepted word of maximum length $N$ traverses all $N$ transitions. Hence the $i$-th transition must be carry each letter occurring in position $i$ among the maximum-length words. Therefore the label of the transition is exactly $A_i$. Also, a state $i$ must be final when $L(M)$ contains a word of length $i$, that is, when $i\in I$. Hence, the representation is exactly $B$.

Conversely, if $L(M)=L(B)$, then $B$ itself is
a bandwidth $1$ partial DFA accepting $L(M)$.

Finally, DFA equivalence is decidable in $O(|\Sigma|N^2)$ by standard algorithms. 
\end{proof}

\section{Bounds on the DFA bandwidth of finite languages}
\label{sec:bounds}

In this section we establish some nontrivial upper and lower bounds on the DFA bandwidth of finite languages, with a worst-case multiplicative gap of \(\bigo(D)\) between the bounds, where $D$ is the length of the longest word in the language. The bounds we present can be efficiently computed from the minimal partial DFA accepting the languages. 

The languages \(\emptyset\) and \(\{\varepsilon\}\) have trivially DFA bandwidth $1$. From here on, throughout this section, we assume that the finite languages \(L\) considered contain at least one nonempty word.

\subsection{DFA bandwidth upper bound for finite languages}

Let \(L\subseteq \Sigma^*\) be finite, and let
\(
M=(Q,\Sigma,\delta,q_0,F)
\)
be the (unique) minimal partial DFA recognizing \(L\). As we work with partial DFAs and finite languages, the minimal automata under consideration are acyclic (no sink state needed). This implies that the minimal partial DFA must always have at least one state with no outgoing transitions. All such states must be final, otherwise they could be removed, given that no final state would be reachable from them, contradicting the minimality of \(M\). If there were more than one final state with no outgoing transitions, they could be merged without affecting the languages. Therefore, the minimal partial DFA has at most one state with no outgoing transitions, which is a final state that we call \(f\). As the minimal partial DFA do not have useless states, we can also conclude that \(f\) is reachable from all the other states.

\begin{definition}
For a finite language \(L\), let \(\operatorname{bw}_{\mathrm{DFA}}(L)\)
denote the minimum bandwidth among all partial DFAs recognizing \(L\).
\end{definition}

\begin{theorem}[Position-unfolding upper bound]\label{thm:posupperbound}
Let \(L\subseteq\Sigma^*\) be finite, let \(M\) be the minimal partial DFA
recognizing \(L\), and let \(D=\max\{|w| \mid w\in L\}\). For \(0\le i \le D\), let
\[
H_i=\{\delta(q_0,w) \mid w\in \Sigma^i\text{ and } \delta(q_0,w) \text{ is defined}\},
\qquad h_i=|H_i|,
\]
and define
\[
U(L)=\max_{1\le i\le D}(h_{i-1}+h_i-1).
\]
Then
\[
\operatorname{bw}_{\mathrm{DFA}}(L)\le U(L).
%
\]
\end{theorem}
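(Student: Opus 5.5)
The plan is to build an explicit partial DFA recognizing \(L\) by unfolding \(M\) along word positions, and then to order its states layer by layer. Every transition will go from one layer to the next, so its length in the ordering is controlled by the sizes of two consecutive layers.

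\textbf{Construction.} Define \(M'\) with state set
\[
Q'=\{(q,i) \mid 0\le i\le D,\ q\in H_i\},
\]
initial state \((q_0,0)\), and final states \(\{(q,i)\in Q' \mid q\in F\}\). For \(a\in\Sigma\), set \(\delta'((q,i),a)=(\delta(q,a),i+1)\) whenever \(\delta(q,a)\) is defined and \(i<D\).

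\textbf{Well-definedness and determinism.}
\begin{itemize}
\item If \(q=\delta(q_0,w)\) with \(|w|=i\) and \(\delta(q,a)\) is defined, then \(\delta(q,a)=\delta(q_0,wa)\in H_{i+1}\). So targets lie in \(Q'\), and \(M'\) is clearly deterministic.
\item For \(q\in H_D\), no transition is lost by the restriction \(i<D\). If \(\delta(q,a)\) were defined, then, since \(M\) has only useful states, some final state would be reachable from \(\delta(q,a)\). This would give an accepted word of length \(>D\), a contradiction.
\item Each \(H_i\) is nonempty, because prefixes of a longest word of \(L\) realize every length \(0,\dots,D\).
\end{itemize}

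\textbf{Language equality.} By induction on \(|w|\), \(\delta'((q_0,0),w)=(\delta(q_0,w),|w|)\) whenever either side is defined. The induction must also cover the case \(|w|>D\), where both sides are undefined. Hence \(w\) is accepted by \(M'\) iff \(\delta(q_0,w)\in F\), so \(L(M')=L\).

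\textbf{Ordering and bandwidth.} List the states of \(H_0\) first, then those of \(H_1\), and so on up to \(H_D\). Within a layer the order is arbitrary. Let \(n=|Q'|=\sum_{i=0}^D h_i\). The initial state \((q_0,0)\) is the unique element of layer \(0\), so it receives index \(0\), as Definition~\ref{definition:k-bandwidth-NFA} requires.

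Consider a transition from layer \(i-1\) to layer \(i\). The worst case is from the first state of layer \(i-1\) to the last state of layer \(i\). Its index difference \(j-i'\) therefore satisfies
\[
1\le j-i'\le h_{i-1}+h_i-1\le U(L).
\]
We also need this to equal \((j-i')\bmod n\). Since there is no wrap-around, we only need \(j-i'<n\). This holds because \(h_{i-1}+h_i-1<\sum_i h_i=n\).

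Thus \(M'\) is a \(U(L)\)-bandwidth partial DFA, and \(\operatorname{bw}_{\mathrm{DFA}}(L)\le U(L)\).

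The only delicate points are the case \(q\in H_D\) and the modular condition; both reduce to the facts that \(D\) is maximal and \(M\) has no useless states. The rest is routine.
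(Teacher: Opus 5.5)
Your proposal is correct and follows essentially the same route as the paper: the same position-unfolding \(M'\) with layers \(H_0,\ldots,H_D\) ordered consecutively, giving forward distance at most \(h_{i-1}+h_i-1\). You also make explicit some points the paper leaves implicit: that no transitions leave the last layer (by maximality of \(D\) and trimness of \(M\)), that the modular condition reduces to the plain index difference because there is no wrap-around, and that the initial state receives index \(0\).
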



\begin{proof}
We construct a position-unfolding of the minimal DFA.

For every \(i\ge 0\), let 
\[
V_i=
\left\{
(q,i) \mid
q\in H_i
\right\}.
\]
Thus \(V_i\) consists of the states of the minimal DFA that occur after reading
exactly \(i\) symbols of some accepted word.

Define a partial DFA \(M'\) as follows. Its state set is
\(
V=\bigcup_{i\ge 0} V_i.
\)
The initial state is \((q_0,0)\). Its accepting states are
\[
G=
\left\{
(q,i)\in V_i \mid
q\in F
\right\}.
\]
For a state \((p,i-1)\in V_{i-1}\) and a letter \(a\in\Sigma\), define
\[
\delta'((p,i-1),a)=(q,i)
\]
if and only if in $M$ we have
\(
\delta(p,a)=q.
\)

Because \(M\) is deterministic, \(M'\) is also deterministic. Moreover, \(M'\) accepts the same language as \(M\): 
every accepting computation of \(M\) on a word \(w\in L\)
corresponds to a unique path in \(M'\), and conversely every accepting path in \(M'\) projects to an accepting path in \(M\) at the same sequence of positions.

The automaton \(M'\) is layered in the sense that its state set is the union of 
\(
V_0,V_1,V_2,\ldots,
\)
and every transition goes from \(V_{i-1}\) to \(V_i\) for some \(i\ge 1\).

Now order the states layer by layer:
\[
V_0,\ V_1,\ V_2,\ \ldots
\]
with an arbitrary order inside each layer. Since all transitions go from one layer to the next, this is a topological ordering of the states. Consider a transition from
\(V_{i-1}\) to \(V_i\). In the worst case, its source state is the first state of
\(V_{i-1}\), and its target is the last state of \(V_i\), giving the largest distance in the topological ordering between any pair of states from $V_{i-1}$ and $V_i$, respectively. This distance is at
most
\[
|V_{i-1}|+|V_i|-1.
\]

Since \(|V_i|=h_i\), this layer-by-layer ordering has bandwidth at most
\[
\max_{1\le i\le D}(h_{i-1}+h_i-1)=U(L).
\]
\end{proof}

In what follows let $U(L)$ denote the position-unfolding upper bound on the DFA bandwidth of a finite language $L$, from Theorem~\ref{thm:posupperbound}.

\subsection{DFA bandwidth lower bound for finite languages}

Let \(L\subseteq\Sigma^*\) be finite, and let
\[
M=(Q,\Sigma,\delta,q_0,F)
\]
be the minimal partial DFA recognizing \(L\) (no sink state). 
%
%
Every state \(p\in Q\) of the minimal DFA represents a \emph{residual} language
\[
R_M(p)=\{z\in\Sigma^* \mid \delta(p,z)\in F\}.
\]
If \(p\) is reached from \(q_0\) by a word \(u\), then 
$R_M(p)=u^{-1}L,$
where \(u^{-1}L=\{v \mid uv\in L\}\) is the left-quotient of \(L\) by \(u\). Note that for two words \(u,v\), if \(\delta(q_0,u)=\delta(q_0,v)=p\), then \(u^{-1}L=v^{-1}L=R_M(p)\).

In the minimal partial DFA for \(L\), for \(p\in Q\) and \(\ell\ge 0\), define
\[
B_\ell(p)
=
\left\{
r\in Q \mid
\exists x\in\Sigma^*,\ |x|\le \ell,\ \delta(p,x)=r
\right\},
\]
in words, \(B_\ell(p)\) denotes the ``ball'' of radius \(\ell\) around \(p\).

\begin{theorem}[Residual-ball lower bound]\label{lem:residual-ball}
For every finite language \(L\),
\[
\operatorname{bw}_{\mathrm{DFA}}(L)
\ge
\max_{p\in Q,\ \ell\ge 1}
\left\lceil
\frac{|B_\ell(p)|-1}{\ell}
\right\rceil.
\]
\end{theorem}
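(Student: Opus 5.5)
Fix an arbitrary partial DFA $A=(Q_A,\Sigma,\delta_A,s_0,F_A)$ recognizing $L$ with bandwidth $k$, together with a witnessing circular ordering $s_0,\dots,s_{n-1}$. Fix also $p\in Q$ and $\ell\ge 1$. The goal is to show $|B_\ell(p)|\le k\ell+1$; since $|B_\ell(p)|$ is an integer, this yields $k\ge\lceil(|B_\ell(p)|-1)/\ell\rceil$. The plan is to transfer the ball from the minimal DFA $M$ to $A$ and compare sizes.

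\textbf{Step 1 (transfer the ball).} Every state of $M$ is reachable, so choose $u$ with $\delta(q_0,u)=p$. Since $p$ is useful, $uz\in L$ for some $z$, and hence $s:=\delta_A(s_0,u)$ is defined. For each $r\in B_\ell(p)$ fix a witness $x_r$ with $|x_r|\le\ell$ and $\delta(p,x_r)=r$. The state $r$ is co-reachable, so $ux_r$ is a prefix of a word in $L$, and therefore $t_r:=\delta_A(s,x_r)$ is defined. The key claim is that $r\mapsto t_r$ is injective. Indeed, the residual of $t_r$ in $A$ equals $(ux_r)^{-1}L=R_M(r)$. In the minimal DFA distinct states have distinct residuals, so $r\ne r'$ forces $t_r\ne t_{r'}$. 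Consequently $A$ contains at least $|B_\ell(p)|$ distinct states reachable from $s$ by words of length at most $\ell$.

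\textbf{Step 2 (count states near $s$).} Let $s=s_j$. Each transition advances the circular index by an amount in $[1,k]$. Hence a word of length $m\le\ell$ leads from $s$ to some $s_{j+d \bmod n}$ with $0\le d\le mk\le \ell k$. So all states reachable from $s$ within $\ell$ steps lie in the circular window $\{s_{j+d \bmod n}\mid 0\le d\le k\ell\}$. This window has at most $k\ell+1$ distinct elements, and the count stays valid even when wrap-around occurs, since wrap-around can only cause repetitions. Combining the two steps gives $|B_\ell(p)|\le k\ell+1$.

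\textbf{Main obstacle.} The only delicate point is Step 1. One must ensure that the runs of $A$ are defined, which is handled by usefulness of the states of $M$. One must also justify injectivity through residuals rather than through states of $M$, since $A$ need not be minimal. No acyclicity or wrap-around argument is needed here, unlike in Proposition~\ref{proposition:dfa-bandwidth-finite-hierarchy}, because the window count is purely combinatorial.
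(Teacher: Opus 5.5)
Your proof is correct and follows the paper's route: take $s=\delta_A(s_0,u)$, which has the same residual as $p$; use that distinct residuals in the minimal DFA force distinct states of $A$; and confine everything reachable from $s$ within $\ell$ steps to a circular window of $\ell k+1$ positions. Your remark that wrap-around only produces repetitions replaces the paper's case split on $\ell k<n$ versus $\ell k\ge n$. You also make explicit that the runs in the partial DFA $A$ are defined, which the paper leaves implicit.
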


\begin{proof}
Let \(A\) be a partial DFA recognizing \(L\), and suppose \(A\) has
bandwidth at most \(k\). To easily distinguish the components of the two DFAs \(M\) and \(A\), we will subscript them with the automaton's name, e.g., \(Q_M\) and \(Q_A\) denote the state set of \(M\) and \(A\), respectively. For a given state \(p_M\) of the minimal DFA accepting \(L\), choose a state \(p_A\) of \(A\) whose residual is the same as \(p_M\)'s. This is always possible, since for any word \(u\) for which \(\delta_M(q_0,u)=p_M\), the state \(p_A=\delta_A(q_0,u)\) has the same residual as \(p_M\), that is, \(R_M(p_M)=R_A(p_A)\).

Every state reachable from \(p_A\) by a word of length at most \(\ell\) must
appear within circular distance at most \(\ell k\) from \(p_A\) in the bandwidth ordering. Let \(n=|Q_M|\). Since \(M\) is minimal, every DFA recognizing \(L\), including \(A\), has at least \(n\) states. If \(\ell k<n\), then \(\ell k<|Q_A|\), so all these states lie in an arc of size at most \(\ell k+1\). If \(\ell k\geq n\), then the inequality \(|B_{\ell}(p_M)|-1\leq \ell k\) is trivial.
On the other hand, the distinct residuals of states in \(B_\ell(p_M)\) require distinct
states of \(A\). Hence
\[
|B_\ell(p_M)|-1\le \ell k.
\]
Therefore
\[
k\ge
\left\lceil
\frac{|B_\ell(p_M)|-1}{\ell}
\right\rceil.
\]
Taking the maximum over \(p_M\) and \(\ell\) gives the result.
\end{proof}

We denote the lower bound in Theorem~\ref{lem:residual-ball} by
\[
\mathrm{LB}_{\mathrm{ball}}(L)
=
\max_{p\in Q,\ \ell\ge 1}
\left\lceil
\frac{|B_\ell(p)|-1}{\ell}
\right\rceil .
\]

\subsection{Gap between bounds} 
Recall from Theorem~\ref{thm:posupperbound}, the notation, for \(i\ge 0\), 
\[
H_i=\{q\in Q \mid \text{ some word of }L\text{ reaches }q
\text{ after exactly }i\text{ letters}\},
\qquad h_i=|H_i|, 
\]
and with \(D=\max\{|w| \mid w\in L\}\), the position-unfolding upper bound
\[
U(L)=\max_{1\le i\le D}(h_{i-1}+h_i-1).
\]

\begin{theorem}
    Let \(L\) be a finite language. Then,
    \[
    \mathrm{LB}_{\mathrm{ball}}(L) 
    \le \operatorname{bw}_{\mathrm{DFA}}(L)
    \le U(L)
    \le 1+(2D-1)\mathrm{LB}_{\mathrm{ball}}(L).
    \]
\end{theorem}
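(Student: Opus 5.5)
The first two inequalities are exactly Theorem~\ref{lem:residual-ball} and Theorem~\ref{thm:posupperbound}, so the plan only concerns the last inequality \(U(L)\le 1+(2D-1)\mathrm{LB}_{\mathrm{ball}}(L)\). The idea is to bound every layer size \(h_i\) by applying the residual-ball lower bound to the initial state \(p=q_0\). Write \(\beta=\mathrm{LB}_{\mathrm{ball}}(L)\). For every \(\ell\ge 1\), the definition of \(\beta\) gives \(|B_\ell(q_0)|-1\le \ell\beta\).

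Next we relate the layers to the balls. Every state of \(H_i\) is reached from \(q_0\) by a word of length exactly \(i\), so \(H_i\subseteq B_i(q_0)\). Moreover, \(q_0\in B_i(q_0)\) for all \(i\ge 0\).

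Now consider the pair of layers \(H_{i-1}\) and \(H_i\) for some \(1\le i\le D\). The idea is to bound \(h_{i-1}+h_i\) by double counting inside the ball \(B_i(q_0)\), which contains both layers. A first, crude attempt gives \(h_{i-1}+h_i\le 2|B_i(q_0)|\le 2(1+i\beta)\). This yields \(U(L)\le 1+2i\beta\le 1+2D\beta\), which is slightly too weak. To reach the constant \(2D-1\), the two layers have to be bounded separately:
\[
h_{i-1}\le |B_{i-1}(q_0)|\le 1+(i-1)\beta \quad (\text{for } i\ge 2), \qquad h_i\le 1+i\beta .
\]
Summing, \(h_{i-1}+h_i-1\le 1+(2i-1)\beta\le 1+(2D-1)\beta\). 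In the case \(i=1\) we have \(h_0=1\), so \(h_0+h_1-1=h_1\le 1+\beta\le 1+(2D-1)\beta\), since \(D\ge 1\). Taking the maximum over \(i\) then gives the claim.

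The main obstacle is the bookkeeping of this step. One must check that using the smaller radius \(i-1\) for \(H_{i-1}\) is legitimate, which it is because \(H_{i-1}\subseteq B_{i-1}(q_0)\). One must also treat the boundary case \(i=1\), where radius \(0\) is excluded from the maximum defining \(\beta\), separately using \(h_0=1\). Finally, one relies on \(\beta\ge 1\), which holds because \(L\) contains a nonempty word, so \(|B_1(q_0)|\ge 2\); this guarantees that the crude bound and the additive \(+1\) terms combine correctly.
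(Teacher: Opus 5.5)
Your proof is correct and is essentially the paper's argument: bound $h_{i-1}$ and $h_i$ separately via $H_j\subseteq B_j(q_0)$ and $|B_j(q_0)|\le 1+j\,\mathrm{LB}_{\mathrm{ball}}(L)$, then sum. Your separate handling of $i=1$ is harmless extra care, since the paper just uses $h_0=1$ inside the same estimate, and only $\beta\ge 0$ (not $\beta\ge 1$) is needed in the last step.
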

\begin{proof}
    By Theorem~\ref{lem:residual-ball} and Theorem~\ref{thm:posupperbound},
\[
\mathrm{LB}_{\mathrm{ball}}(L)
\le
\operatorname{bw}_{\mathrm{DFA}}(L)
\le
U(L).
\]
We now compare these two quantities. Since every state occurring after exactly
\(i\) letters is reachable from \(q_0\) by a word of length at most \(i\), we have
\[
H_i\subseteq B_i(q_0).
\]
Hence, by the definition of \(\mathrm{LB}_{\mathrm{ball}}(L)\),
\[
h_i\le |B_i(q_0)|
\le
1+i\,\mathrm{LB}_{\mathrm{ball}}(L).
\]
Therefore, for every \(1\le i\le D\),
\[
h_{i-1}+h_i-1
\le
\bigl(1+(i-1)\mathrm{LB}_{\mathrm{ball}}(L)\bigr)
+
\bigl(1+i\mathrm{LB}_{\mathrm{ball}}(L)\bigr)
-1.
\]
Thus
\[
h_{i-1}+h_i-1
\le
1+(2i-1)\mathrm{LB}_{\mathrm{ball}}(L)
\le
1+(2D-1)\mathrm{LB}_{\mathrm{ball}}(L).
\]
Taking the maximum over \(i\) gives
\[
U(L)
\le
1+(2D-1)\mathrm{LB}_{\mathrm{ball}}(L).
\]
\end{proof}

Consequently, the residual-ball lower bound and the position-unfolding upper bound give an \(2D\)-approximation of the DFA bandwidth of finite languages. Moreover, both bounds are polynomial-time computable from the minimal partial DFA \(M\). The sets \(H_i\) required for the upper bound can be computed by dynamic programming following their definition, and then \(U(L)\) is obtained by a maximum search. For the lower bound, it is easy to see that it is enough to consider balls of radius at most \(|Q|-1\), since the shortest path between any two states cannot be longer than that. Then, one can compute \(B_\ell(p)\) for all \(p\in Q\) and \(0\le \ell<|Q|-1\) by dynamic programming.

\begin{example}[Matching bounds]\label{ex:1}
Fix integer \(r\ge 1\)
and let
\[
\Sigma_r=\{a_1,\ldots,a_r,b_1,\ldots,b_r\}.
\]
Define
\[
L_{r}=\{a_jb_j \mid 1\le j\le r\}.
\]

The minimal partial DFA accepting \(L_{r}\) consists of \(r\) paths
\[
q_0
\xrightarrow{a_j}
p_{j}
\xrightarrow{b_j}
f,
\]
one for each \(j=1,\ldots,r\), with the final state \(f\) shared by all paths.

The states \(p_{j}\) have pairwise distinct residuals, since the unique
continuation from \(p_{j}\) is \(b_j\). Thus, the initial
state has \(r\) distinct successor residuals. Hence every partial DFA
recognizing \(L_{r}\) has bandwidth at least \(r\):
\[
\operatorname{bw}_{\mathrm{DFA}}(L_{r})\ge r.
\]

Conversely, order the states as
\[
q_0,
p_{1},\ldots,p_{r},
f.
\]
Every transition is between states at a distance at most \(r\) in the linear ordering above. Thus
\[
\operatorname{bw}_{\mathrm{DFA}}(L_{r})\le r.
\]
Therefore
\[
\operatorname{bw}_{\mathrm{DFA}}(L_{r})=r.
\]

For the position-unfolding upper bound we have 
\(h_0=1\), \(h_1=r\) and \(h_2=1\),
and we get 
\[
U(L_{r})=
\max\{h_0+h_1-1,h_1+h_2-1\}
=r.
\]
For the lower bound, we get
\[
B_1(q_0)=\{q_0,p_{1,1},\ldots,p_{r,1}\},
\]
therefore 
\[
\mathrm{LB}_{\mathrm{ball}}(L_{r})\ge |B_1(q_0)|-1=r.
\]
This, together with 
\(
\mathrm{LB}_{\mathrm{ball}}(L_{r})\le \operatorname{bw}_{\mathrm{DFA}}(L_{r})=r,
\)
implies 
\(
\mathrm{LB}_{\mathrm{ball}}(L_{r})=r.
\)
Thus the bounds meet exactly, for the infinite family \(\{L_{r}\mid r\ge 1\}\), 
that is, 
\[
U(L_{r})=\mathrm{LB}_{\mathrm{ball}}(L_{r})=r, \quad \forall r\ge 1.
\] 
\end{example}

\section{Final Discussion}

We extended the notion of bandwidth for finite automata to the deterministic setting by applying it to partial DFAs. In that context, we have seen that, even for finite languages, a strict infinite hierarchy of language classes defined by bandwidth can be proven (see Figure~\ref{fig:relationship}). Therefore, in contrast to the nondeterministic setting, there does not exist any constant $k$ such that all finite languages can be accepted by $k$-BW-DFAs.
For bandwidth $k=1$, we obtained polynomial time decidability for the question whether a $1$-BW-DFA $A$ with $L(A) = L_f$ exists for a given finite language $L_f$ represented by some DFA. For arbitrary $k$, and DFAs $A$, whether such an efficient decision procedure exists, remains an open problem that could indeed turn out to be NP-hard. Nonetheless, given an arbitrary DFA $A$, we provided efficiently computable upper and lower bounds for the minimal bandwidth $k$ needed to encode $L(A)$ in a $k$-BW-DFA, having a multiplicative gap bounded by $2D$, where $D$ is the length of the longest word in $L(A)$. An example provided showed that there exist input DFAs that hit both bounds simultaneously, resulting in the ability to determine the necessary bandwidth $k$ efficiently. However, the example construction uses an unbounded alphabet that grows with the bandwidth. It is left as future work to determine whether the bounds meet for an infinite family of finite languages over a fixed alphabet.

\begin{figure}[h!]
    \centering
    \resizebox{0.45\textwidth}{!}{%
        \begin{tikzpicture}[
            outline/.style={draw, thick},
            levelname/.style={fill=white, inner sep=2pt},
            example/.style={fill=white, inner sep=1.5pt, font=\small}
        ]

        \node[fill=white, inner sep=0pt] at (0, 6.0) {$\vdots$};

        \draw[outline] (0, 2.6) ellipse [x radius=4cm, y radius=2.75cm];
        \node[levelname] at (0, 5.35) {$\mathcal{L}_{f,k}$};
        \node[example] at (0, 4.5)
            {$\{a^i b^{k^2} a^i \mid i\in[m]\}$};

        \draw[outline] (0, 1.9) ellipse [x radius=3cm, y radius=2cm];
        \node[fill=white, inner sep=0pt] at (0, 3.5) {$\vdots$};

        \draw[outline] (0, 1.35) ellipse [x radius=2.5cm, y radius=1.4cm];
        \node[levelname] at (0, 2.75) {$\mathcal{L}_{f,2}$};
        \node[example] at (0, 2.0)
            {$\{ab^4a, aab^4aa\}$};

        \draw[outline] (0, 0.60) ellipse [x radius=1.45cm, y radius=0.60cm];
        \node[levelname] at (0, 1.25) {$\mathcal{L}_{f,1}$};
        \node[example] at (0, 0.55)
            {$\{aba\}$};

        \end{tikzpicture}%
    }
    \caption{A relationship among $\mathcal{L}_{f,1},\mathcal{L}_{f,2},\ldots,\mathcal{L}_{f,k},\ldots$, established in Proposition~\ref{proposition:dfa-bandwidth-finite-hierarchy}.}
    \label{fig:relationship}
\end{figure}

These results extend the solid foundation for relevant questions regarding the optimization of finite automata that are to be encoded on circular DNA for RNA transcription. In particular, they provide a deeper insight if we assume the given automata to be deterministic. In future work, several problems are to be considered. First of all, we would like to know if it can be generally efficiently decided whether a given bandwidth $k$ suffices to encode the language $L(A)$ of a given DFA $A$ in a $k$-BW-DFA. In addition to that, the question of minimization has not been tackled for DFAs, yet. For NFAs and unbounded alphabets, we know from previous work the NP-hardness of the question for a given finite language $L_f$, a bandwidth $k$ and a maximum number of states $n$, whether an $n$-state $k$-BW-NFA $A$ with $L(A)=L_f$ exists or not. Whether such a result also holds for DFAs, remains an open question. Especially for bounded alphabets.

Here we were primarily concerned with finite languages and exact matching, therefore extending this investigation to arbitrary regular languages represented by DFAs or to an approximate setting where some unwanted words with certain properties might still be allowed, could be a worthwhile direction for future research.

Finally let us raise a point about the definition of bandwidth. The well-established notion of bandwidth of a graph is defined slightly differently than how we did, by taking the maximum of the absolute value between the indices of two vertices (states) connected by an edge, over all vertex pairs. We defined the distance modulo the number of states rather than adapting the graph definition, because in the setting where the problem occurred~\cite{ChoFSW25}, the automata were embedded in a circular medium where repeating a transition costs a full circle (hence our definition \(0<(j-i)\bmod n\leq k\)), making it undesirable. However, depending on the one-dimensional medium considered for the encoding, it may make just as much sense to directly adapt the graph definition (\(|j-i|\)), for instance, if one considers a bounded tape with bidirectional head movements for reading the transitions. We think that at least in the case of finite languages (accepted by DFA with acyclic transition graphs) most of our techniques and proof ideas work with the other definition, up to minor changes in the proof details, but leave the treatment of the other naturally defined bandwidth notion as future work, and note that in the case of infinite languages, the differences might be substantial.

\subsection*{Acknowledgements.}
We would like to thank the anonymous referees for their detailed suggestions for improvements, which led, in particular, to a cleaner presentation of the bounds in Section~\ref{sec:bounds} and the correction of Example~\ref{ex:1}.

This work was supported by grants funded by the Korea government (MSIT): the Institute of Information \& Communications Technology Planning \& Evaluation (IITP) under the Artificial Intelligence Convergence Innovation Human Resources Development Program (IITP-2026-RS-2023-00255968) and the National Research Foundation of Korea (NRF) (RS-2025-25436818), awarded to D.-J. C. 
This work was also supported by JSPS KAKENHI Grant Number JP23K10976, awarded to S. Z. F.
In addition, M.~W. gratefully acknowledges Dirk Nowotka of Kiel University for his support and encouragement in participating in this collaboration.

\newpage
\bibliographystyle{eptcs}
\bibliography{bibliography}




\end{document}